\newtheorem{theorem}{Theorem}
\newtheorem{lemma}[theorem]{Lemma}
\newtheorem{remark}[theorem]{Remark}
\begin{document}
\title[ Pricing of Basket Options]{Pricing of Basket Options Using Polynomial Approximations }
\author[Olivares, Pablo]{Pablo Olivares}
\address{Department of Mathematics, Ryerson University}

\begin{abstract}
In this paper we use Bernstein and Chebyshev polynomials to approximate the price of some basket options under a bivariate Black-Scholes model. The method consists in expanding the price of a univariate related contract after conditioning on the remaining underlying assets  and  calculating  the mixed exponential-power moments of a Gaussian distribution that arise as a consequence of such approximation. Our numerical implementation  on spreads contracts shows the method is as accurate as a standard Monte Carlo approach at considerable lesser computational effort.
\end{abstract}

\keywords{Bernstein, Chebyshev, Taylor, basket options, spread options.}
\maketitle

\section{Introduction}
The objective of the paper is the study of  basket options under  a standard multivariate Black-Scholes model using different polynomial approximations  of a related conditional contract on one of the underlying assets. \\
By conditioning on some of the asset prices we reduce the problem to computing the expected value of the Black-Scholes formula with a random strike price. Under suitable polynomial approximations this expected value can be obtained from  the corresponding one dimensional price at selected points together with the  truncated exponential-power moments of a joint Gaussian distribution. It allows to compute prices with a precision comparable to a Monte Carlo approach and considerable less computational effort. \\
An alternative point of view using Taylor polynomials is  followed in  Li, Deng and Zhou (2008, 2010) or Alvarez and Olivares (2014).  Although a Taylor approximation produces fairly estimates in terms of a simple closed-formula based on the derivatives of the conditional price, it is quite sensible to the point around the expansion is done. Moreover, as this expansion is  local at a particular value of the parametric set, it may introduce significant errors, albeit infrequent, at values far from the point where the expansion is considered.\\
In order to overcome this potential problem we study  developments in terms of Bernstein and Chebyshev polynomial approximations, which offer a uniform convergence of the conditional price on a given closed interval. For the seek of simplicity we focus on bidimensional spread contracts and a Black-Scholes model, but the method can be extended to more complex models or other European derivatives. See for example Olivares (2014) for applications to the case of jump-diffusion and switching jump-diffusion models. Interesting approximations in terms of Fourier series can be found in Meng and Ding(2013) and in Fang and Oosterlee(2009).\\
Basket options are a multivariate extensions of univariate European calls or puts. A basket option takes the weighted average of a group of $d$ stocks (the basket) as the underlying, and produces a payoff equal to the maximum of zero and the difference between the weighted average and the strike (or the opposite difference for the case of a put). Index options, whose value depend on the movement of an equity or other financial index such as the S\&P 500 and real options based on the difference between gas and oil prices are examples of such contracts.\\
 In the particular case of spread options, several approximations have been previously considered in the works of Kirk(1995), Carmona and Durrleman(2003), Li, Deng  and Zhou(2008, 2010) and Venkatramanan and Alexander (2011),  where different ad-hoc approaches are studied.  \\
 The organization of the paper is the following, in section 2 we introduce the model, main notations and derive Bernstein  approximation for spread options. In section 3 we study the case of a Chebyshev approximation and the sensitivity with respect to the spot prices. In section 4 we discuss the  numerical implementation and results. Finally in section 5 we conclude.
\section{An Approximation Based on Bernstein Polynomials}
We first introduce some notations and our main model.\\
 Let  $(\Omega ,\mathcal{A}, \mathbb{P},\{\mathcal{F}_{t}\}_{t>0})$ be a filtered probability space and define the filtration  $\mathcal{F}^{X_t}:= \sigma(X_s, 0 \leq s \leq t)$ as the $\sigma$-algebra generated by the random variables $\{X_s, 0 \leq s \leq t \}$ completed in the usual way. Denote  by $\mathcal{Q}$ the risk neutral measure and $E_{\mathcal{Q}}$ the expectation under $\mathcal{Q}$.\\
 The quantities $\mu_{a,b}$ and $M_X(u,a,b)$ represent respectively the truncated moment and the generating moment function (g.m.f.) on $[a,b]$, while the function $N(.)$ is the cumulative distribution function (c.d.f.) of a standard normal distribution.\\
 The matrix $A'$ represents the transpose of  matrix $A=(a_{ij})$, while $diag(A)$ is, depending on the context, a diagonal matrix with components $a_{ii}$ in the main diagonal or a column vector with components from the diagonal of the matrix $A$.  On the other hand $A^{\frac{1}{2}}$ denotes a matrix such that $A^{\frac{1}{2}} A^{\frac{1}{2}}=A$.  The d-dimensional column vector of ones is described by $1_d$.\\
The process of spot prices is denoted by $S_t=\left(S^{(1)}_t,S^{(2)}_t,\ldots,S^{(d)}_t \right)'_{0 \leq t \leq T}$ while $Y_t=(Y_t^{(1)},Y_t^{(2)},\ldots,Y_t^{(d)})'_{0 \leq t \leq T}$ defines the asset log-return process at time $t$, related as
\begin{equation}\label{}
    S^{(j)}_t= S^{(j)}_0 \exp(Y_t^{(j)})\;\; \text{for}\;\; j=1,2,\ldots,d
\end{equation}
We analyze European basket options whose payoff, at maturity time $T$ and  strike price $K$, is given by:
\begin{equation*}\label{eq:basket}
    h(S_T)=\left( \sum_{j=1}^d w_j S_T^{(j)}-K \right)_+
\end{equation*}
where $(w_j)_{1 \leq j \leq d}$ are some deterministic weights and $x_+=max(x,0)$. \\
Probably the most studied of these contracts  are spread options whose payoff is
\begin{equation*}\label{eq:spread}
    h(S_T)=(  S_T^{(1)}- S_T^{(2)}-K )_+
\end{equation*}
We will focus on the case $d=2$  assuming a bidimensional Black-Scholes dynamics, that  under the risk neutral probability is given by:
\begin{equation}\label{eq:bscmultid}
    dS_t=rS_t dt+ \Sigma^{\frac{1}{2}} S_t dB_t
\end{equation}
where $B_t=(B^{(1)}_t, B^{(2)}_t)_{t \geq 0}$ is a two dimensional vector of  Brownian motions such that $<B^{(1)}_t,B^{(2)}_t>=\rho t$ and $\Sigma$ is a positive definite symmetric matrix defined by:
\begin{equation*}\label{}
    \Sigma=\left(\begin{array}{ll}
                   \sigma_1^2 & 0 \\
                 0  & \sigma_2^2
                 \end{array}
                 \right)
\end{equation*}
 Here $r>0$ is the (constant) interest rate. \\
Equivalently, after applying Ito formula:
\begin{equation}\label{eq:retbscmultid}
    dY_t=(r-\frac{1}{2} diag(\Sigma)) dt+ \Sigma^{\frac{1}{2}}  dB_t
\end{equation}
  The price of a basket contact with maturity at $T>0$ and payoff $h(S_T)$, denoted by $C_S:=C_S(S_0,K,\Sigma,\rho,r,T))$, is given by:
\begin{eqnarray}
C_S &=&e^{-rT}E_{\mathcal{Q}}h(S_{T})=E_{\mathcal{Q}}\left[ e^{-rT}E_{\mathcal{Q}}\left(h(S_{T})|\mathcal{F}^{Y^{(2)}_T} \right)\right]
\notag \\
&=&E_{\mathcal{Q}}[C (Y^{(2)}_T)]
 \label{eq:gen-price}
\end{eqnarray}
where:
\begin{equation}\label{eq:condprice3}
C (y):=e^{-rT} E_{\mathcal{Q}}\left( h(S_{T})|\mathcal{F}^{Y^{(2)}_T} \right)\mid_{Y^{(2)}_T=y}
\end{equation}
represents the price of a one dimensional European contract with underlying  $S_T^{(1)}$ and payoff $h(S_T)$ when $Y_T^{(2)}=y$.\\
We start  writing the price $C_S$ in terms of a conditional  price via the following  elementary lemma.
\begin{lemma} \label{lemma1}
The price  of a European basket contract with payoff $h(S_T)$  under the model given by equation (\ref{eq:bscmultid}) is
\begin{equation}\label{eq:spreadcond}
    C_S= w_1 e^A E_{\mathcal{Q}} \left(e^{\frac{\sigma_1}{\sigma_2}\rho Y^{(2)}_T}C (Y^{(2)}_T) \right)
\end{equation}
where:
\begin{equation*}\label{eq:bschkvariab}
    C(y):=C_{BS}(K(y),\sigma,S_0^{(1)})
\end{equation*}
 is the Black-Scholes price of a Call Option written on the underlying asset price $S_T^{(1)}$ with strike price $K(y)$, maturity at $T>0$, volatility  $\sigma$, spot price $S_0^{(1)}$ and strike price given by:
  \begin{equation*}\label{}
    K(y)=\frac{1}{w_1} e^{-A} \left( Ke^{-\frac{\sigma_1}{\sigma_2}\rho y}- w_2 S_0^{(2)}e^{(1-\frac{\sigma_1}{\sigma_2}\rho)y} \right)
  \end{equation*}
  with:
\begin{eqnarray*}\label{constanta}
    A &=& - \frac{1}{2}  \frac{\sigma_1}{\sigma_2} \rho (\rho \sigma_1 \sigma_2 +2r -\sigma_2^2)T\\ \notag
    \sigma &= & \sqrt{1-\rho^2} \sigma_1
   \end{eqnarray*}
\end{lemma}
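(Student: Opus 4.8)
The plan is to carry out the conditioning on $Y_T^{(2)}$ explicitly and reduce the inner conditional expectation to a standard Black–Scholes price with a shifted strike. First I would exploit the joint Gaussian structure of $(Y_T^{(1)},Y_T^{(2)})$: from (\ref{eq:retbscmultid}) each $Y_T^{(j)}$ is normal with mean $(r-\tfrac12\sigma_j^2)T$ and variance $\sigma_j^2 T$, and the covariance is $\rho\sigma_1\sigma_2 T$. The key is to decompose $Y_T^{(1)}$ into its projection onto $Y_T^{(2)}$ plus an independent residual, i.e. write $Y_T^{(1)} = \alpha + \beta\, Y_T^{(2)} + \sqrt{1-\rho^2}\,\sigma_1\sqrt{T}\,Z$ with $Z\sim N(0,1)$ independent of $Y_T^{(2)}$, where $\beta = \tfrac{\sigma_1}{\sigma_2}\rho$ is the regression coefficient and $\alpha$ is chosen to match means. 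This is exactly where the combination $\tfrac{\sigma_1}{\sigma_2}\rho$ and the residual volatility $\sigma=\sqrt{1-\rho^2}\,\sigma_1$ in the statement originate.

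Next I would substitute this decomposition into $S_T^{(1)} = S_0^{(1)}\exp(Y_T^{(1)})$, so that conditionally on $Y_T^{(2)}=y$ the price $S_T^{(1)}$ is a lognormal variable of the form $(\text{deterministic factor depending on }y)\times\exp(\sqrt{T}\,\sigma Z - \tfrac12\sigma^2 T)\times(\text{drift correction})$. Plugging into the payoff $h(S_T)=(S_T^{(1)} - S_T^{(2)} - K)_+$ with $S_T^{(2)} = S_0^{(2)}\exp(y)$ (here $w_1=1$, $w_2=-1$ for the spread; for the general basket one keeps the weights), the inner expectation becomes, up to the deterministic prefactor, the expectation of $(\,\tilde S\, e^{\sigma\sqrt T Z-\frac12\sigma^2 T} - \tilde K(y)\,)_+$ over $Z\sim N(0,1)$, which is by definition the Black–Scholes call price $C_{BS}$ with the stated volatility $\sigma$ and an effective strike. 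Collecting the deterministic prefactor — which carries an $e^{\beta y}$ piece and a constant piece — and matching constants gives the factor $w_1 e^A$, the exponential weight $e^{\frac{\sigma_1}{\sigma_2}\rho Y_T^{(2)}}$ outside, and the explicit form of $K(y)$ and of $A$.

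The main obstacle is purely bookkeeping: one must be careful to separate, inside the exponential, the part that depends on $y$ linearly (which must be pulled outside the conditional $C_{BS}$ as the weight $e^{\beta y}$ and is the reason the $e^{(1-\beta)y}$ term appears inside $K(y)$ as the coefficient of $w_2 S_0^{(2)}$), from the part that combines with the independent Gaussian $Z$ to form a \emph{bona fide} risk-neutral lognormal with the correct martingale normalization (so that $C_{BS}$ can be invoked with spot $S_0^{(1)}$ and no residual drift). Tracking the drift terms $-\tfrac12\sigma_1^2 T$, $-\tfrac12\sigma_2^2 T$ and the cross term $\rho\sigma_1\sigma_2 T$ through the algebra is what produces the constant $A = -\tfrac12\tfrac{\sigma_1}{\sigma_2}\rho(\rho\sigma_1\sigma_2 + 2r - \sigma_2^2)T$; verifying that these collapse to exactly that expression is the one computation I would do carefully. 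Everything else is substitution into (\ref{eq:gen-price}) and (\ref{eq:condprice3}).
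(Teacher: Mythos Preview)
Your proposal is correct and follows essentially the same route as the paper's proof: the paper also writes $Y_T^{(1)}=\mu(Y_T^{(2)})+\sigma\sqrt{T}\,Z$ with $\mu(y)=\bigl(r(1-\tfrac{\sigma_1}{\sigma_2}\rho)+\tfrac12\sigma_1\sigma_2\rho-\tfrac12\sigma_1^2\bigr)T+\tfrac{\sigma_1}{\sigma_2}\rho\,y$ and $Z\sim N(0,1)$ independent of $Y_T^{(2)}$, substitutes this into the payoff, factors out $e^{\mu(Y_T^{(2)})}$, inserts the compensator $e^{(r-\frac12\sigma^2)T}$ so that the inner expectation becomes $C_{BS}(K(y),\sigma,S_0^{(1)})$, and then reads off $A$ from the remaining deterministic exponent. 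Your $\alpha+\beta y$ is exactly their $\mu(y)$, and the bookkeeping you flag as the main obstacle is precisely the computation the paper carries out in equations (\ref{eq:pgral})--(\ref{eq:bsspiecebs2}).
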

\begin{proof}
From equation (\ref{eq:retbscmultid}) we have:
 \begin{equation*}  \label{eq:risk-neutral-constant}
Y_T=(Y_T^{(1)},Y_T^{(2)}) \sim N \left( (r1_2-\frac{1}{2}  diag(\Sigma))T,
\Sigma_{\rho} T \right)
\end{equation*}
where:
\begin{equation*}\label{}
    \Sigma_{\rho}=\left(\begin{array}{ll}
                   \sigma_1^2 & \rho \sigma_1 \sigma_2 \\
                 \rho \sigma_1 \sigma_2   & \sigma_2^2
                 \end{array}
                 \right)
\end{equation*}
Moreover, it is well known that the conditional  distribution of $Y_T^{(1)}$ given $Y_T^{(2)}$ is also normal. See for example Tong(1989). Thus we can write
\begin{equation}\label{eq:condnormalrv}
Y_T^{(1)}= \mu(Y^{(2)}_T)+\sigma \sqrt{T}Z\;\;\;\text{in law}
\end{equation}
where:
\begin{equation*}\label{eq:mu}
 \mu( Y^{(2)}_T)=(r(1-\frac{\sigma_1}{\sigma_2} \rho)+ \frac{1}{2}\sigma_1 \sigma_2 \rho-\frac{1}{2} \sigma_1^2) T +\frac{\sigma_1}{\sigma_2} \rho Y_T^{(2)}
\end{equation*}
and  $Z \sim N(0,1)$, independent of $Y_T^{(2)}$.\\
  Next, from equation (\ref{eq:gen-price}) we have:
\begin{eqnarray} \notag
C_S &=& w_1 e^{-rT}E_{\mathcal{Q}} \left(E_{\mathcal{Q}} \left[\left( S_0^{(1)}e^{Y_T^{(1)}}+ \frac{w_2}{w_1}S_0^{(2)}e^{Y_T^{(2)}} -\frac{K}{w_1} \right)_+\mid \mathcal{F}^{Y^{(2)}_T} \right] \right) \\ \notag
&=& w_1 e^{-rT}E_{\mathcal{Q}} \left(E_{\mathcal{Q}} \left[\left( S_0^{(1)}e^{Y_T^{(1)}}-K'(Y^{(2)}_T)  \right)_+\mid \mathcal{F}^{Y^{(2)}_T} \right] \right) \\ \label{eq:pgral}
 \end{eqnarray}
 where $K'(y)=\frac{K}{w_1}- \frac{w_2}{w_1}S_0^{(2)}e^{y}$.\\
 Substituting equation (\ref{eq:condnormalrv}) into (\ref{eq:pgral}) we have:
 \small{
 \begin{eqnarray}\notag
    C_S   &=& w_1 e^{-rT}E_{\mathcal{Q}}\left[e^{\mu({Y_T^{(2)})}} E_{\mathcal{Q}} \left( \left( S_0^{(1)}e^{\sigma\sqrt{T}Z}-e^{-\mu({Y_T^{(2)})}}K'(Y^{(2)}_T)  \right)_+ \mid \mathcal{F}^{Y^{(2)}_T} \right) \right]\\ \notag
   &=& w_1 e^{-rT}\\ \notag
 && E_{\mathcal{Q}}\left[e^{-(r-\frac{1}{2}\sigma^2) T+\mu({Y_T^{(2)})}} E_{\mathcal{Q}}\left( \left( S_0^{(1)}e^{(r-\frac{1}{2}\sigma^2)T+\sigma\sqrt{T}Z}-K(Y^{(2)}_T) \right)_+ \mid \mathcal{F}^{Y^{(2)}_T} \right)\right] \\ \notag
 &=&  w_1  E_{\mathcal{Q}}\left[e^{-(r-\frac{1}{2}\sigma^2) T+\mu({Y_T^{(2)})}}C(Y^{(2)}_T) \right] \\ \label{eq:bsspiecebs2}
 &&
   \end{eqnarray}
  }
where:
  \begin{equation*}
  C(Y^{(2)}_T):= C_{BS}(K(Y^{(2)}_T),\sigma,S_0^{(1)})= e^{-rT} E_{\mathcal{Q}}\left[\left( S_0^{(1)}e^{(r-\frac{1}{2}\sigma^2)T+\sigma\sqrt{T}Z}
 -  K(Y^{(2)}_T) \right)_+ \mid \mathcal{F}^{Y^{(2)}_T}\right]
  \end{equation*}
  Equation (\ref{eq:spreadcond}) follows immediately.
\end{proof}
 For any $n \in \mathbb{N}$ and $[a, b]  \subset \mathbb{R}$  we consider the n-th expansion of the function $C(y)$ in terms of Bernstein polynomials  on the interval $[a,b]$. Denoted as  $\tilde{C}_B(y,a,b,n)$ it is defined by:
\begin{eqnarray}\label{eq:bernsexp}
 \tilde{C}_B(y,a,b,n)  &=& \sum_{\nu=0}^n C((b-a)\frac{\nu}{n}+a)b_{\nu,n}(y;a,b)1_{[a,b]}(y)
\end{eqnarray}
where for $a \leq y \leq b$, $\nu=1,2,\ldots,n$ the functions
\begin{equation*}
b_{\nu,n}(y;a,b)=b_{\nu,n}\left( \frac{y-a}{b-a}\right)=\frac{(-1)^{n-\nu}}{(b-a)^n} \left(\begin{array}{c}
                                                                               n \\
                                                                               \nu
                                                                             \end{array}
                                                                             \right)
(y-a)^{\nu}(y-b)^{n-\nu}
\end{equation*}
 are the Bernstein polynomials of order $n$ on $[a,b]$ and $b_{\nu,n}(y)=b_{\nu,n}(y;0,1)$\\
Consequently the  Bernstein  approximation of n-th order truncated on $[a,b]$ for the price $C_S$ of a basket option with payoff $h(S_T)$  is defined as:
\begin{equation}\label{berntapprox3}
  \tilde{C}_{B}(a,b,n):= w_1  E_{\mathcal{Q}}\left[e^{-(r-\frac{1}{2}\sigma^2) T+\mu({Y_T^{(2)})}} \tilde{C}_B(Y^{(2)}_T,a,b,n) \right]
\end{equation}
The next theorem provides an expression for the approximation defined above.
\begin{theorem}\label{theo:bernstein}
The Bernstein  approximation  of  n-th order for the price $C_S$ of a spread contract with payoff $h(S_T)$  under the model (\ref{eq:bscmultid}) is given by:
\begin{eqnarray}\notag
  \tilde{C}_{B}(a,b,n)& =& w_1 \left( \frac{(r-\frac{1}{2}\sigma_2^2)T-b}{b-a} \right)^n \\
  &&\sum_{\nu=0}^n  \left(\begin{array}{c}
                                                                               n \\
                                                                               \nu
                                                                             \end{array}\right)
(-1)^{n-\nu} C((b-a)\frac{\nu}{n}+a) \left( \frac{b-a}{(r-\frac{1}{2}\sigma_2^2)T-b} \right)^{\nu}G(\nu)\\ \label{eq:approxgen}
&&
\end{eqnarray}
where:
\begin{eqnarray*}
 G(\nu)  &=& \sum_{k=0}^{\nu}  \left(\begin{array}{c}

                                                                                \nu\\
                                                                                k
                                                                             \end{array} \right)\left( \frac{(r-\frac{1}{2}\sigma_2^2)T-b}{b-a} \right)^k H(n-\nu+k)\\
 H(l)&=&   \sum_{m=0}^{l} \left(\begin{array}{c}
                                 l \\
                                 m
                                 \end{array}\right)
                                                                             \left( \frac{\sqrt{T}\sigma_2}{(r-\frac{1}{2}\sigma_2^2)T-b} \right)^m\left((r-\frac{1}{2}\sigma_2^2)T-b\right)^{n-\nu+k} F(m)\\
 F(m)&=& e^{-\frac{u^2}{2}} \frac{d^m M_Z(u,\tilde{a},\tilde{b})}{d u^m}\mid_{u=\sigma1 \rho \sqrt{T}}=\sum_{k=0}^{m} \left( \begin{array}{c}
                                                                                m\\
                                                                                k
                                                                             \end{array} \right)
(\sigma_1 \rho \sqrt{T})^{m-k}\mu_{\tilde{a}-\sigma_1 \rho \sqrt{T},\tilde{b}-\sigma_1 \rho \sqrt{T}}(k)
\end{eqnarray*}
and
\begin{equation*}\label{eq:bschkvariab}
    C(y):=C_{BS}(K(y),\sigma,S_0^{(1)})
\end{equation*}
 is the Black-Scholes price of a Call Option written on the underlying asset price $S_T^{(1)}$ with strike price $K(y)$, maturity at $T>0$, volatility  $\sigma$, spot price $S_0^{(1)}$ and strike price:
  \begin{equation}\label{}
    K(y)=\frac{1}{w_1} e^{-A} \left( Ke^{-\frac{\sigma_1}{\sigma_2}\rho y}- w_2 S_0^{(2)}e^{(1-\frac{\sigma_1}{\sigma_2}\rho)y} \right)
  \end{equation}
  with:
\begin{eqnarray}\label{constanta}
    A &=& - \frac{1}{2}  \frac{\sigma_1}{\sigma_2} \rho (\rho \sigma_1 \sigma_2 +2r -\sigma_2^2)T\\ \notag
    \sigma &=& \sqrt{1-\rho^2} \sigma_1
   \end{eqnarray}
   and
 \begin{equation*}
    \tilde{a}=\frac{a-(r-\frac{1}{2}\sigma_2^2)T}{\sqrt{T}\sigma_2},\;\;\;\tilde{b}=\frac{ b-(r-\frac{1}{2}\sigma_2^2)T}{\sqrt{T}\sigma_2}
 \end{equation*}

\end{theorem}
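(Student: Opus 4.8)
The plan is to start from the definition (\ref{berntapprox3}) of $\tilde{C}_B(a,b,n)$, substitute the Bernstein expansion (\ref{eq:bernsexp}), interchange the finite sum with $E_{\mathcal{Q}}$, and evaluate the resulting Gaussian integrals term by term. Throughout I would write $Y$ for $Y^{(2)}_T$; by the proof of Lemma \ref{lemma1}, $Y\sim N(m,s^2)$ under $\mathcal{Q}$ with $m:=(r-\tfrac12\sigma_2^2)T$ and $s:=\sigma_2\sqrt T$, so that $\tilde a,\tilde b$ are the standardized endpoints $(a-m)/s$ and $(b-m)/s$, and I would set $c:=\tfrac{\sigma_1}{\sigma_2}\rho$. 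The key first observation is that the weight in (\ref{berntapprox3}) is a pure exponential in $y$, namely $e^{-(r-\frac12\sigma^2)T+\mu(y)}=e^{A}e^{cy}$; this is exactly the algebra implicit in passing from (\ref{eq:bsspiecebs2}) to (\ref{eq:spreadcond}) inside the proof of Lemma \ref{lemma1}. Since $b_{\nu,n}(y;a,b)1_{[a,b]}(y)$ is a polynomial times the indicator of a bounded set, all the expectations below are finite, and the theorem reduces to evaluating, for each $\nu=0,\dots,n$, the quantity $e^{A}E_{\mathcal{Q}}\big[e^{cY}b_{\nu,n}(Y;a,b)1_{[a,b]}(Y)\big]$, which by the explicit form of $b_{\nu,n}$ equals $\tfrac{(-1)^{n-\nu}}{(b-a)^n}\binom n\nu$ times $e^{A}E_{\mathcal{Q}}\big[e^{cY}(Y-a)^\nu(Y-b)^{n-\nu}1_{[a,b]}(Y)\big]$.

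Next I would strip the polynomial down to powers of $(Y-b)$ only. Expanding $(Y-a)^\nu=\big((Y-b)+(b-a)\big)^\nu$ by the binomial theorem gives $(Y-a)^\nu(Y-b)^{n-\nu}=\sum_{k=0}^\nu\binom\nu k(b-a)^{\nu-k}(Y-b)^{n-\nu+k}$, reducing everything to the quantities $E_{\mathcal{Q}}\big[e^{cY}(Y-b)^{l}1_{[a,b]}(Y)\big]$ with $l=n-\nu+k$. Standardizing $Y=m+sZ$, $Z\sim N(0,1)$, expanding $(Y-b)^l=\big((m-b)+sZ\big)^l$ by the binomial theorem once more, and using $e^{cY}=e^{cm}e^{csZ}$ together with $1_{[a,b]}(Y)=1_{[\tilde a,\tilde b]}(Z)$, leaves only the basic quantities $E_{\mathcal{Q}}\big[Z^{m}e^{csZ}1_{[\tilde a,\tilde b]}(Z)\big]$, where $cs=\sigma_1\rho\sqrt T$; these are the source of the function $F$.

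To evaluate $E_{\mathcal{Q}}\big[Z^{m}e^{csZ}1_{[\tilde a,\tilde b]}(Z)\big]$ I would use two descriptions. First, differentiating $M_Z(u,\tilde a,\tilde b)=E_{\mathcal{Q}}\big[e^{uZ}1_{[\tilde a,\tilde b]}(Z)\big]$ $m$ times under the integral sign (legitimate, since the interval is bounded) and evaluating at $u=cs$, after pulling out the Gaussian normalization $e^{-u^2/2}$, gives the first expression for $F(m)$. Second, completing the square $e^{uz-z^2/2}=e^{u^2/2}e^{-(z-u)^2/2}$ shows $e^{-u^2/2}E_{\mathcal{Q}}\big[Z^{m}e^{uZ}1_{[\tilde a,\tilde b]}(Z)\big]=E_{\mathcal{Q}}\big[(W+u)^{m}1_{[\tilde a-u,\tilde b-u]}(W)\big]$ for $W\sim N(0,1)$; expanding $(W+u)^m$ by the binomial theorem and recognizing $E_{\mathcal{Q}}\big[W^k1_{[\tilde a-u,\tilde b-u]}(W)\big]=\mu_{\tilde a-u,\tilde b-u}(k)$ yields the second, closed expression for $F(m)$ at $u=\sigma_1\rho\sqrt T$.

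Two things then remain, and these carry the real work. First, the exponential constant: the factors $e^{A}$ (from Lemma \ref{lemma1}), $e^{cm}$ (from the standardization) and $e^{(cs)^2/2}$ (produced by completing the square, i.e. the $e^{-u^2/2}$ removed in defining $F$) combine to $e^{A+cm+\frac12(cs)^2}$, and one must verify the identity $A+cm+\tfrac12(cs)^2=0$, i.e. $-\tfrac12\tfrac{\sigma_1}{\sigma_2}\rho(\rho\sigma_1\sigma_2+2r-\sigma_2^2)T+\tfrac{\sigma_1}{\sigma_2}\rho(r-\tfrac12\sigma_2^2)T+\tfrac12\sigma_1^2\rho^2T=0$; this is why no exponential constant survives in front of (\ref{eq:approxgen}). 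Second, and this is the only genuinely delicate step, one must reorganize the nested binomial sums — over $\nu$ from the Bernstein expansion, over $k$ from $(Y-a)^\nu=((Y-b)+(b-a))^\nu$, and over the surviving power of $Z$ from $(Y-b)^l=((m-b)+sZ)^l$ — together with the accumulated powers of $b-a$ and of $(r-\tfrac12\sigma_2^2)T-b$, reading the innermost sum as $H(\cdot)$, the next as $G(\nu)$, and the outermost as (\ref{eq:approxgen}). The analytic content lies entirely in the exponential-weight identity and the single completion-of-square computation above; the remaining difficulty is purely the bookkeeping of these two families of powers.
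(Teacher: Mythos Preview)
Your proposal is correct and follows essentially the same route as the paper's own proof: substitute the Bernstein expansion into (\ref{berntapprox3}), use Lemma~\ref{lemma1} to reduce the weight to $e^A e^{cY}$, expand $(Y-a)^\nu=((Y-b)+(b-a))^\nu$, standardize $Y=m+sZ$ and expand $(Y-b)^l=((m-b)+sZ)^l$, and finally evaluate $E_{\mathcal Q}[Z^m e^{uZ}1_{[\tilde a,\tilde b]}(Z)]$ both as $\partial_u^m M_Z(u,\tilde a,\tilde b)$ and, after completing the square, as a sum of truncated moments. Your explicit verification that $A+cm+\tfrac12(cs)^2=0$ is a welcome addition; the paper leaves this cancellation implicit in passing from the intermediate expressions to the final formula~(\ref{eq:approxgen}).
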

\begin{proof}
  By Lemma \ref{lemma1} we replace equation (\ref{eq:bernsexp}) into equation (\ref{berntapprox3}) to have:
  \begin{eqnarray} \nonumber
  && \tilde{C}_{B}(a,b,n) = w_1  E_{\mathcal{Q}}\left[e^{-(r-\frac{1}{2}\sigma^2) T+\mu({Y_T^{(2)})}}\tilde{C}_B(Y_T^{(2)},a,b,n)\right]\\ \nonumber
  &=& w_1 (b-a)^{-n} \sum_{\nu=0}^n  \left(\begin{array}{c}
                                                                                  n\\
                                                                                  \nu
                                                                                \end{array}\right)                                                                                 (-1)^{n-\nu} C_{S}((b-a)\frac{\nu}{n}+a) \\ \nonumber
  && E_{\mathcal{Q}}\left[e^{-(r-\frac{1}{2}\sigma^2)T+\mu(Y_T^{(2)})}\left(Y^{(2)}_T-a \right)^{\nu}\left(Y^{(2)}_T-b \right)^{n-\nu}1_{[a, b]}(Y^{(2)}_T)\right] \label{eq:mixmom1}
  \end{eqnarray}
Now:
\begin{eqnarray} \notag
 && E_{\mathcal{Q}}\left[e^{-rT+\frac{1}{2}\sigma^2T+\mu(Y_T^{(2)})}\left(Y^{(2)}_T-a \right)^{\nu}\left(Y^{(2)}_T-b \right)^{n-\nu}1_{[a, b]}(Y^{(2)}_T)\right] \\ \notag
 &=&e^A \sum_{k=0}^{\nu} \left(\begin{array}{c}
                                                                                  \nu \\
                                                                                  k
                                                                                \end{array}
 \right) (b-a)^{\nu-k} E_{\mathcal{Q}}\left[e^{\frac{\sigma1}{\sigma2}\rho Y_T^{(2)}}\left(Y^{(2)}_T-b\right)^{n-\nu+k}1_{[a, b]}(Y^{(2)}_T)\right] \\
  && \label{eq:expinter}
 \end{eqnarray}
 Moreover we have:
 \begin{eqnarray} \nonumber
 &&  E_{\mathcal{Q}}\left[ e^{\frac{\sigma_1}{\sigma_2}\rho Y_T^{(2)} } \left(Y_T^{(2)}-b \right)^{n-\nu+k} 1_{[a, b]}(Y^{(2)}_T)\right]\\ \nonumber
 &=&    \sum_{m=0}^{n-\nu+k} \left(\begin{array}{c}
                    n-\nu+k \\
                    m
                  \end{array} \right) \left((r-\frac{1}{2}\sigma_2^2 )T-b\right)^{n-\nu+k-m}E_{\mathcal{Q}} \left[ e^{\frac{\sigma_1}{\sigma_2}\rho Y_T^{(2)} }\left(Y_T^{(2)}-E_{\mathcal{Q}}(Y_T^{(2)}) \right)^m 1_{[a, b]}(Y^{(2)}_T)\right]\\ \nonumber
 &=& e^{\frac{\sigma_1}{\sigma_2}\rho (r-\frac{1}{2}\sigma_2^2)T}\left((r-\frac{1}{2}\sigma_2^2)T-b\right)^{n-\nu+k} \sum_{m=0}^{n-\nu+k} \left(\begin{array}{c}
                    n-\nu+k \\
                    m
                  \end{array} \right) \left(\frac{\sqrt{T}\sigma_2}{(r-\frac{1}{2}\sigma_2^2)T-b}\right)^m  \\ \nonumber
                   && E_{\mathcal{Q}} \left[  e^{\sigma_1 \rho \sqrt{T} Z} Z^m 1_{[\tilde{a}, \tilde{b}]}(Z)\right] \\ \label{eq:mgf2}
                   &&
 \end{eqnarray}
 Next, notice that:
 \begin{eqnarray}\nonumber
&& E_{\mathcal{Q}} \left[ e^ {u Z} Z^m 1_{[\tilde{a}, \tilde{b}]}(Z)\right] =  \frac{d^m M_Z\left( u,\tilde{a}, \tilde{b}\right)}{du^m}\\ \nonumber
&=& \frac{1}{\sqrt{2 \pi}}\int_{\tilde{a}}^{\tilde{b}}e^{-\frac{1}{2}(x^2-2 u x)}x^m dx = e^{\frac{u^2 }{2} }\frac{1}{\sqrt{2 \pi}}\int_{\tilde{a}}^{\tilde{b}}e^{-\frac{1}{2}(x- u)^2}x^m dx\\ \nonumber
&=& e^{\frac{u^2 }{2} }\frac{1}{\sqrt{2 \pi}}\int_{\tilde{a}-u}^{\tilde{b}-u}e^{-\frac{1}{2}y^2}(y+u)^m dy\\ \nonumber
&=& e^{\frac{u^2 }{2} } \sum_{\nu=0}^m \left(\begin{array}{c}
                    m \\
                    \nu
                  \end{array} \right)u^{m-\nu}\mu_{\tilde{a}-u, \tilde{b}-u}(\nu)\\ \label{eq:mgftrunc1}
                  &&
  \end{eqnarray}
 Finally substituting equation (\ref{eq:mgftrunc1}), evaluated at $u=\sigma_1 \rho \sqrt{T}$, into equation (\ref{eq:mgf2}), then equation (\ref{eq:mgf2}) into equation (\ref{eq:expinter}) we get equation (\ref{eq:approxgen} ).
\end{proof}
  \begin{remark}
  Notice that the approximation $\tilde{C}_{B}(a,b,n)$ depends only on the values of a Black-Scholes option price on a partition of the interval $[a,b]$  and the truncated mixed exponential-power moments of a Gaussian multivariate distribution.
  \end{remark}
    \section{Approximation by Chebyshev Polynomials}
We study an alternative approximation of the price via Chebyshev polynomials. For definition and their basic properties see for example  Mason and Handscomb (2003).\\
 Denoting by $(T_k(y))_{k \in \mathbb{N}}$ the sequence of Chebyshev polynomials of first type on $[-1,1]$ we consider the n-th approximation of the function $C(y)$  on the interval $[a,b]$ described by equation  in terms of Chebyshev polynomials the one given by:
 \begin{equation}\label{eq:bscheby}
    C_{CH}(y,a,b,n)=\frac{1}{2}\hat{c}_0 1_{[a,b]}(y)+\sum_{k=1}^n \hat{c}_k T_k^{a,b}(y)1_{[a,b]}(y)
 \end{equation}
 where $(T^{a,b}_k (x))_{k \in \mathbb{N}}$ is the sequence of Chebyshev polynomials of first type on $[a,b]$ defined by:
 \begin{equation*}\label{}
    T_k^{a,b}(y)=T_k \left(-1+\frac{2}{(b-a)}(y-a)\right),\;\;\;\; a \leq y \leq b
 \end{equation*}
 and the values $(\hat{c}_k)_{0 \leq k \leq N}$ are estimators of the corresponding coefficients in the Chebyshev expansion.\\
  Chebyshev polynomials on $[a,b]$ are orthogonal with respect to the scalar product defined as:
 \begin{equation*}
    <f,g>=\int_a^b f(x) g(x) w_{a,b}(x) dx
 \end{equation*}
 with weight function $w_{a,b}(x)=\left(1-\left(\frac{2(x-a)}{b-a}-1 \right)^2 \right)^{-\frac{1}{2}}$. \\
Notice that
\begin{equation*}
||T^{a,b}_k ||^2   = \left \{
\begin{array}{cc}
 \frac{(b-a) \pi}{4}  & \text{for}\;\; k \neq 0 \\
 \frac{(b-a) \pi}{2}  &   \text{for}\;\; k = 0
\end{array}
\right.
 \end{equation*}
Then for $k \neq 0$ the coefficients in the expansion can be calculated as:
 \begin{eqnarray*}
   c_k &=&  \frac{<C,T^{a,b}_k>}{||T^{a,b}_k ||^2 }=\frac{4}{(b-a)\pi} \int_a^b C(y)T_k^{a,b}(y) w_{a,b}(y) dy\\
   &=& \frac{4}{(b-a)\pi} \int_a^b C(y)T_k(-1+\frac{2}{(b-a)}(y-a)) w_{a,b}(y) dy\\
   &=& \frac{2}{\pi} \int_{-1}^1 C(a+ \frac{b-a}{2}(x+1))T_k(x) w_{-1,1}(x) dx\\
   &=& \frac{2}{\pi} \int_{0}^{\pi} C(a+ \frac{b-a}{2}(\cos \theta +1))\cos(k \theta) d \theta\\
 \end{eqnarray*}
after changes of variables $x=-1+\frac{2}{(b-a)}(y-a)$ and  $x=\cos \theta$.\\
Also:
\begin{equation*}
    c_0=\frac{1}{\pi} \int_{0}^{\pi} C(a+ \frac{b-a}{2}(\cos \theta +1)) d \theta
\end{equation*}
From  the trapezoidal rule to approximate Riemnan integrals the coefficients $(c_k)_{0 \leq k \leq N}$ can be estimated by an equidistant partition of $N$ points on $[0,\pi]$:
\begin{eqnarray*}\label{}
    \hat{c}_k & \simeq & \frac{b-a}{ N \pi} \sum_{j=0}^N d^{(k)}_j,\;\;\;\;k=1,2,\ldots,n\\
    \hat{c}_0  & \simeq & \frac{b-a}{2 N \pi} \sum_{j=0}^N d^{(0)}_j
\end{eqnarray*}
where:
\begin{equation*}
    d^{(k)}_j= \left\{ \begin{array}{ll}
                    C(b) & \text{for}\;j=0\\
                     2 C(a+ \frac{b-a}{2}(\cos(\frac{\pi j}{N})  +1))\cos (\frac{\pi k j}{N}) & \text{for}\;j=1,2,\ldots,N-1\\
                    C(a)  & \text{for}\; j=N
                  \end{array}
                  \right.
\end{equation*}
Chebyshev polynomials of first type can be written in terms of powers of the variable. From Amparo et al.(2007):
\begin{equation}\label{eq:chebpower}
    T^{a,b}_k(x)=\sum_{l=0}^{[\frac{k}{2}]} b_l^{(k)} \left( -1+\frac{2}{b-a}(x-a) \right)^{k-2l}
\end{equation}
where:
\begin{equation*}
b_l^{(k)}=\left \{ \begin{array}{ll}
                  (-1)^{\frac{k}{2}}   & \text{for}\;l=\frac{k}{2}\;\;\text{and}\;\;k
                  \;\;\text{even} \\
          (-1)^l2^{k-2l-1} \frac{k}{k-l}  \left(\begin{array}{c}
                    k-l \\
                    l
                  \end{array} \right) & \text{for}\; l < \frac{k}{2}
                  \end{array}
                  \right.
  \end{equation*}
In a similar way to the Bernstein polynomial approach we define the n-th order Chebyshev approximation for the spread price as:
\begin{equation*}
  \hat{C}_{CH}(a,b,n):= w_1  E_{\mathcal{Q}}\left[e^{-(r-\frac{1}{2}\sigma^2) T+\mu({Y_T^{(2)})}} \tilde{C}_{CH}(Y^{(2)}_T,a,b,n) \right]
\end{equation*}
The next theorem provides the Chebyshev approximation for the price  of a Basket option:
\begin{theorem}\label{teo:chebappr}
The n-th order Chebyshev's  approximation  of the price $C_S$ of a European basket option under the model given by equation (\ref{eq:bscmultid}) with payoff $h(S_T)$ is given by:
\begin{eqnarray} \nonumber
  \hat{C}_{CH}(a,b,n)& =&  \frac{w_1}{2}\hat{c}_0 [N(\tilde{b}-\sigma_1 \rho \sqrt{T})-N(\tilde{a}-\sigma_1 \rho \sqrt{T})]\\ \nonumber
  &+&  w_1 e^{-\frac{1}{2}\rho^2 \sigma_1^2 T} \sum_{k=1}^n \sum_{l=0}^{[\frac{k}{2}]} \hat{c}_k b_l^{(k)} \left(\frac{2\sigma_2 \sqrt{T}}{b-a} \right)^{k-2l} \hat{G}(k-2l)\\ \label{eq:chebprice}
  &&
\end{eqnarray}
where:
\begin{eqnarray*}
\hat{ G}(k)  &=&  \sum_{m=0}^{k}  \left(\begin{array}{c}
                    k \\
                    m
                  \end{array} \right) \left(\frac{2(r-\frac{1}{2}\sigma_2^2 )T-a-b}{2\sigma_2 \sqrt{T}} \right)^m \frac{d^{k-m} M_{Z}(u,\tilde{a},\tilde{b})}{du^{k-m}}\mid_{u=\sigma_1 \rho \sqrt{T}}
\end{eqnarray*}
for $k=0,1,2,\ldots,n$,
\begin{equation*}
    \tilde{b}=\frac{b-(r-\frac{1}{2}\sigma_2^2 )T}{\sigma_2 T},\;\;\tilde{a}=\frac{a-(r-\frac{1}{2}\sigma_2^2 )T}{\sigma_2 T}
\end{equation*}

 \end{theorem}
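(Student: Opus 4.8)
The plan is to follow the same route as the proof of Theorem~\ref{theo:bernstein}, with the Bernstein basis replaced by the monomial form of the Chebyshev polynomials given in (\ref{eq:chebpower}). First I would insert the expansion (\ref{eq:bscheby}) into the definition of $\hat{C}_{CH}(a,b,n)$ and, using linearity of $E_{\mathcal{Q}}$ (the coefficients $\hat{c}_0,\dots,\hat{c}_n$ are deterministic), reduce the problem to evaluating, for each $k\ge 1$, the expectation
\[
E_{\mathcal{Q}}\left[e^{-(r-\frac{1}{2}\sigma^2)T+\mu(Y^{(2)}_T)}\,T^{a,b}_k(Y^{(2)}_T)\,1_{[a,b]}(Y^{(2)}_T)\right],
\]
together with its $k=0$ counterpart in which $T^{a,b}_k$ is replaced by the constant $1$. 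As in Lemma~\ref{lemma1} I would first record the algebraic identity $-(r-\frac{1}{2}\sigma^2)T+\mu(Y^{(2)}_T)=A+\frac{\sigma_1}{\sigma_2}\rho\,Y^{(2)}_T$, so that the exponential weight is exactly the one appearing in (\ref{eq:spreadcond}).

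For the constant term, writing $Y^{(2)}_T=(r-\frac{1}{2}\sigma_2^2)T+\sigma_2\sqrt{T}Z$ with $Z\sim N(0,1)$ turns $1_{[a,b]}(Y^{(2)}_T)$ into $1_{[\tilde{a},\tilde{b}]}(Z)$ and reduces $e^{A}E_{\mathcal{Q}}[e^{\frac{\sigma_1}{\sigma_2}\rho Y^{(2)}_T}1_{[a,b]}(Y^{(2)}_T)]$ to a single Gaussian integral; completing the square as in (\ref{eq:mgftrunc1}) with $m=0$, and using the identity $A+\frac{\sigma_1}{\sigma_2}\rho(r-\frac{1}{2}\sigma_2^2)T=-\frac{1}{2}\rho^2\sigma_1^2T$, produces the factor $N(\tilde{b}-\sigma_1\rho\sqrt{T})-N(\tilde{a}-\sigma_1\rho\sqrt{T})$ and hence the first line of (\ref{eq:chebprice}). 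For $k\ge 1$ I would substitute (\ref{eq:chebpower}), noting that $-1+\frac{2}{b-a}(y-a)=\frac{2}{b-a}\bigl(y-\frac{a+b}{2}\bigr)$, so that each Chebyshev term is a linear combination of the centred powers $\bigl(Y^{(2)}_T-\frac{a+b}{2}\bigr)^{k-2l}$ with coefficients $b^{(k)}_l\bigl(\frac{2}{b-a}\bigr)^{k-2l}$. Expanding $\bigl(Y^{(2)}_T-\frac{a+b}{2}\bigr)^{k-2l}$ binomially around $(r-\frac{1}{2}\sigma_2^2)T$, pulling the factor $e^{\frac{\sigma_1}{\sigma_2}\rho(r-\frac{1}{2}\sigma_2^2)T}$ out of the exponential, and invoking (\ref{eq:mgftrunc1}) to write $E_{\mathcal{Q}}[e^{\sigma_1\rho\sqrt{T}Z}Z^m1_{[\tilde{a},\tilde{b}]}(Z)]=\frac{d^m M_Z(u,\tilde{a},\tilde{b})}{du^m}\big|_{u=\sigma_1\rho\sqrt{T}}$, yields exactly the inner double sum defining $\hat{G}(k-2l)$.

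The only genuine work is the bookkeeping: combining $w_1$, $e^{A}$ and $e^{\frac{\sigma_1}{\sigma_2}\rho(r-\frac{1}{2}\sigma_2^2)T}$ into the single constant $w_1e^{-\frac{1}{2}\rho^2\sigma_1^2T}$, and re-indexing the binomial sum (replacing the summation variable $m$ by $k-2l-m$) so that the powers of $\sigma_2\sqrt{T}$ and of $(r-\frac{1}{2}\sigma_2^2)T-\frac{a+b}{2}$ end up in the positions displayed in the definition of $\hat{G}$. I expect no real obstacle here: the computation is the verbatim analogue of (\ref{eq:mixmom1})--(\ref{eq:mgftrunc1}) in the proof of Theorem~\ref{theo:bernstein}, so I would present it by referring to that proof and spelling out only the two differences, namely the $k=0$ term (which has no Bernstein counterpart) and the use of (\ref{eq:chebpower}) in place of the binomial expansion of the Bernstein functions.
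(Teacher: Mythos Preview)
Your proposal is correct and follows essentially the same route as the paper: invoke Lemma~\ref{lemma1} to replace the weight by $e^{A}e^{\frac{\sigma_1}{\sigma_2}\rho Y^{(2)}_T}$, treat the $k=0$ term by completing the square in the Gaussian integral, and for $k\ge 1$ plug in the monomial representation (\ref{eq:chebpower}), binomially expand $\bigl(Y^{(2)}_T-\tfrac{a+b}{2}\bigr)^{k-2l}$ around $E_{\mathcal{Q}}Y^{(2)}_T=(r-\tfrac12\sigma_2^2)T$, and reduce to derivatives of the truncated m.g.f.\ via (\ref{eq:mgftrunc1}). The paper's proof is exactly this computation, including the simplification $e^{A}e^{\frac{\sigma_1}{\sigma_2}\rho(r-\frac12\sigma_2^2)T}=e^{-\frac12\rho^2\sigma_1^2T}$ that you single out.
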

  \begin{proof}
  Notice that:
  \begin{eqnarray*}
   e^A E_{\mathcal{Q}}\left[e^{\frac{\sigma1}{\sigma2}\rho Y_T^{(2)}}1_{(-\infty,b)}(Y_T^{(2)}) \right]  &=& e^A  e^{(r-\frac{1}{2}\sigma_2^2 ) \frac{\sigma_1}{\sigma_2}\rho T}E_{\mathcal{Q}}\left[e^{\sigma_1 \rho \sqrt{T}Z}1_{(-\infty,\tilde{b})}(Z) \right]\\
   &=& e^A  e^{(r-\frac{1}{2}\sigma_2^2 ) \frac{\sigma_1}{\sigma_2}\rho T}e^{\frac{1}{2} \sigma_1^2 \rho^2 T}\frac{1}{\sqrt{2 \pi}} \int_{-\infty}^{\tilde{b}-\sigma_1 \rho \sqrt{T}}e^{-\frac{1}{2}z^2}\;dz\\
   &=& N(\tilde{b}-\sigma_1 \rho \sqrt{T})
    \end{eqnarray*}
    Hence:
    \begin{equation*}
     e^A E_{\mathcal{Q}}\left[e^{\frac{\sigma1}{\sigma2}\rho Y_T^{(2)}}1_{[a,b]}(Y_T^{(2)}) \right]= N(\tilde{b}-\sigma_1 \rho \sqrt{T})-N(\tilde{a}-\sigma_1 \rho \sqrt{T})
    \end{equation*}
 From Lemma \ref{lemma1}, we take into account  equations (\ref{eq:bscheby}) and (\ref{eq:chebpower}) to have:
  \begin{eqnarray} \nonumber
   \hat{C}_{CH}(a,b,n)&=& w_1 e^A E_{\mathcal{Q}} \left(e^{\frac{\sigma_1}{\sigma_2}\rho Y^{(2)}_T}\hat{C}(Y^{(2)}_T,a,b,n) \right)\\ \nonumber
    &=&  \frac{w_1}{2}  \hat{c}_0  [N(\tilde{b}-\sigma_1 \rho \sqrt{T})-N(\tilde{a}-\sigma_1 \rho \sqrt{T})] \\ \nonumber
      &+& w_1 e^A  \sum_{k=1}^n \hat{c}_k  E_{\mathcal{Q}}\left[ e^{\frac{\sigma1}{\sigma2}\rho Y_T^{(2)}}T_k^{a,b}(Y^{(2)}_T) 1_{[a,b]}(Y^{(2)}_T) \right] \\ \nonumber
      &=&  \frac{w_1}{2}\hat{c}_0 [N(\tilde{b}-\sigma_1 \rho \sqrt{T})-N(\tilde{a}-\sigma_1 \rho \sqrt{T})] \\ \nonumber
       &+& w_1 e^A  \sum_{k=1}^n \sum_{l=0}^{[\frac{k}{2}]} \frac{\hat{c}_k b_l^{(k)}}{(b-a)^{k-2l}} E_{\mathcal{Q}}\left[e^{\frac{\sigma1}{\sigma2}\rho Y_T^{(2)}}\left( 2Y^{(2)}_T-a-b \right)^{k-2l} 1_{[a,b]}(Y^{(2)}_T)\right]\\ \label{eq:bss1}
       &&
      \end{eqnarray}
     Now:
     \begin{eqnarray*}
    &&  E_{\mathcal{Q}}\left[e^{\frac{\sigma1}{\sigma2}\rho Y_T^{(2)}}( 2Y^{(2)}_T-a-b)^{k-2l} 1_{[a,b]}(Y^{(2)}_T)\right]  \\
    &=&    \sum_{m=0}^{k-2l}\left(\begin{array}{c}
                    k-2l \\
                    m
                  \end{array} \right) (2 E_{\mathcal{Q}}Y^{(2)}_T-a-b)^m 2^{k-2l-m}E_{\mathcal{Q}}\left[e^{\frac{\sigma1}{\sigma2}\rho Y_T^{(2)}}(Y^{(2)}_T-E_{\mathcal{Q}}Y^{(2)}_T)^{k-2l-m} 1_{[a,b]}(Y^{(2)}_T)\right]\\
    &=&  e^{\frac{\sigma_1}{\sigma_2}\rho (r-\frac{1}{2}\sigma_2^2 )T} (2 \sigma_2 \sqrt{T})^{k-2l}\sum_{m=0}^{k-2l}\left(\begin{array}{c}
                    k-2l \\
                    m
                  \end{array} \right) \left(\frac{2(r-\frac{1}{2}\sigma_2^2)T-a-b}{2\sigma_2 \sqrt{T}}\right)^m    \\
    && E_{\mathcal{Q}} \left[ e^{\sigma_1 \rho  \sqrt{T} Z}Z^{k-2l-m} 1_{[\tilde{a}, \tilde{b}]}(Z)\right]\\ \nonumber
  &=&  e^{\frac{\sigma_1}{\sigma_2}\rho (r-\frac{1}{2}\sigma_2^2 )T} (2 \sigma_2 \sqrt{T})^{k-2l}\sum_{m=0}^{k-2l}\left(\begin{array}{c}
                    k-2l \\
                    m
                  \end{array} \right) \left(\frac{2(r-\frac{1}{2}\sigma_2^2)T-a-b}{2\sigma_2 \sqrt{T}}\right)^m    \\
    && \frac{d^{k-2l-m} M_Z(u,\tilde{a},\tilde{b})}{du^{k-2l-m}} \mid_{u=\sigma_1 \rho \sqrt{T}}
    \end{eqnarray*}
     Substituting the last expression into equation (\ref{eq:bss1}) we get equation (\ref{eq:chebprice})
    \end{proof}
\subsection{Sensitivities with respect to the parameters}
Sensitivities with respect to the parameters in the model and  contract can be analyzed with the help of the approximations above. It is worth noticing that the lack of a closed-form formula of the spread price makes impossible to directly differentiate it to obtain the corresponding \textit{Greek}. On the other hand a Monte Carlo approach to compute derivatives of the price with a reasonable error, in practice, requires a considerable  extra amount of computational effort.  Here we focus on the deltas of the spread contract, i.e. the sensitivities with respect to the spot prices given by a Chebyshev approximation. Other sensitivities follows in a similar way.\\
To this end we slightly change the notations,  explicitly stating the dependence on the initial prices $S_0^{(1)}=s_1$ and $S_0^{(2)}=s_2$. Thus, we write $C(y,s_1,s_2)$ instead of $C(y)$ in equation (\ref{eq:condprice3}) and $K(y,s_2)$ for $K(y)$.\\
The conditional price can be written now:
\begin{equation*}\label{}
    C(y,s_1,s_2)= s_1 N(d_1(s_1, K(y,s_2))-K(y,s_2)e^{-rT} N(d_2(s_1, K(y,s_2))
\end{equation*}
where:
\begin{eqnarray*}\label{}
    d_1(s_1,K(y,s_2))&=&\frac{\ln \left(\frac{s_1}{K(y,s_2)}\right)+(r+\frac{\sigma^2}{2})T}{\sigma \sqrt{T}}\\
    d_2(s_1,K(y,s_2))&=& d_1(s_1,K(y,s_2))-\sigma \sqrt{T}
\end{eqnarray*}
Then by elementary differentiation:
\begin{eqnarray*}
   \frac{\partial C}{\partial s_1}&=& s_1 f_Z( d_1(s_1,K(y,s_2)))\frac{\partial d_1}{\partial s_1}-e^{-rT}K(y,s_2)f_Z( d_2(s_1,K(y,s_2)))\frac{\partial d_2}{\partial s_1}+ N(d_1(s_1,K(y,s_2)))\\
   &=& \frac{1}{\sigma \sqrt{T}} \left[f_Z( d_1(s_1,K(y,s_2)))-e^{-rT}\frac{K(y,s_2)}{ s_1}f_Z( d_2(s_1,K(y,s_2))) \right]+N(d_1(s_1,K(y,s_2)))\\
   \frac{\partial C}{\partial s_2}&=& s_1 f_Z( d_1(s_1,K(y,s_2)))\frac{\partial d_1}{\partial s_2}\\
   &-&e^{-rT}K(y,s_2)f_Z( d_2(s_1,K(y,s_2)))\frac{\partial d_2}{\partial s_2}+ e^{-rT}\frac{ \partial K(y,s_2)}{\partial s_2} N( d_2(s_1,K(y,s_2)))\\
   &=&\frac{w_2}{w_1}e^{-A}e^{(1-\frac{\sigma_1}{\sigma_2}\rho)y} \left[\frac{1}{\sigma \sqrt{T} w_1} \left(-\frac{s_1}{K(y,s_2)}f_Z( d_1(s_1,K(y,s_2)))+e^{-rT} f_Z( d_2(s_1,K(y,s_2))) \right) \right.\\
   &-& \left. e^{-rT}N( d_2(s_1,K(y,s_2)))\right]
\end{eqnarray*}
where  $f_Z$ and $N$ are respectively the density and cumulative distribution functions of a standard normal random variable.\\
Taking into account equation (\ref{eq:bscheby}) a natural estimator of $\frac{\partial C(y)}{\partial s_j}$ is:
\begin{equation*}\label{eq:bschebyder}
  \frac{\partial C_{CH}(y,a,b,n)}{\partial s_j}  =\frac{1}{2}\frac{\partial \hat{c}_0 }{\partial s_j}+\sum_{k=1}^n  \frac{\partial \hat{c}_k }{\partial s_j}T_k^{a,b}(y)\;\;\;\text{for}\;\;j=1,2.
 \end{equation*}
Notice that $C(y,s_1,s_2)$ is continuously differentiable with respect to \\
$s_1$ and $s_2$, then derivative and integral can be interchanged. Therefore we have:
\begin{equation*}\label{}
  \frac{\partial \hat{c}_k }{\partial s_j}= \frac{2}{\pi} \int_{0}^{\pi} \frac{\partial C }{\partial s_j}\left(a+ \frac{b-a}{2}(\cos \theta +1),s_1,s_2 \right)\cos(k \theta) d \theta
\end{equation*}
for $k=1,2,\ldots,n$.\\
The coefficients in the development are estimated by:
\begin{eqnarray*}\label{}
    \frac{\partial \hat{c}_k }{\partial s_j} & \simeq & \frac{b-a}{ N \pi} \sum_{j=0}^N \frac{\partial d^{(k)}_j(s_1,s_2) }{\partial s_j}\;\;\;\;k=1,2,\ldots,n\\
    \frac{\partial \hat{c}_0 }{\partial s_j}  & \simeq & \frac{b-a}{2 N \pi} \sum_{j=0}^N \frac{\partial d^{(0)}_j(s_1,s_2) }{\partial s_j}
\end{eqnarray*}
where:
\begin{equation*}
  \frac{\partial   d^{(k)}_l }{\partial s_j}(s_1,s_2)= \left\{ \begin{array}{ll}
    2 \frac{\partial C}{\partial s_j}(a+ \frac{b-a}{2}(\cos(\frac{\pi l}{N})  +1))\cos (\frac{\pi k l}{N}) & \text{for}\;l=1,2,\ldots,N-1\\
                    0  & \text{for}\; l=0,N
                  \end{array}
                  \right.
\end{equation*}
Finally we estimate $\Delta^{(j)} C_S$  by:
\begin{eqnarray} \notag
  \Delta^{(j)} C_S&=&\frac{\partial  \hat{C}_{CH}(a,b,n) }{\partial s_j} =  \frac{w_1}{2}\frac{\partial \hat{c}_0 }{\partial s_j}[N(\tilde{b}-\sigma_1 \rho \sqrt{T})-N(\tilde{a}-\sigma_1 \rho \sqrt{T})]\\ \notag
   &+&  w_1 e^{-\frac{1}{2}\rho^2 \sigma_1^2 T} \sum_{k=1}^n \sum_{l=0}^{[\frac{k}{2}]} \frac{\partial \hat{c}_k }{\partial s_j}  b_l^{(k)} \left(\frac{2\sigma_2 \sqrt{T}}{(b-a)} \right)^{k-2l} \hat{G}(k-2l)\\ \label{eq:chebdelta}
  &&
\end{eqnarray}
\begin{figure}[htb!]
  \includegraphics[width=12 cm, height=10 cm]{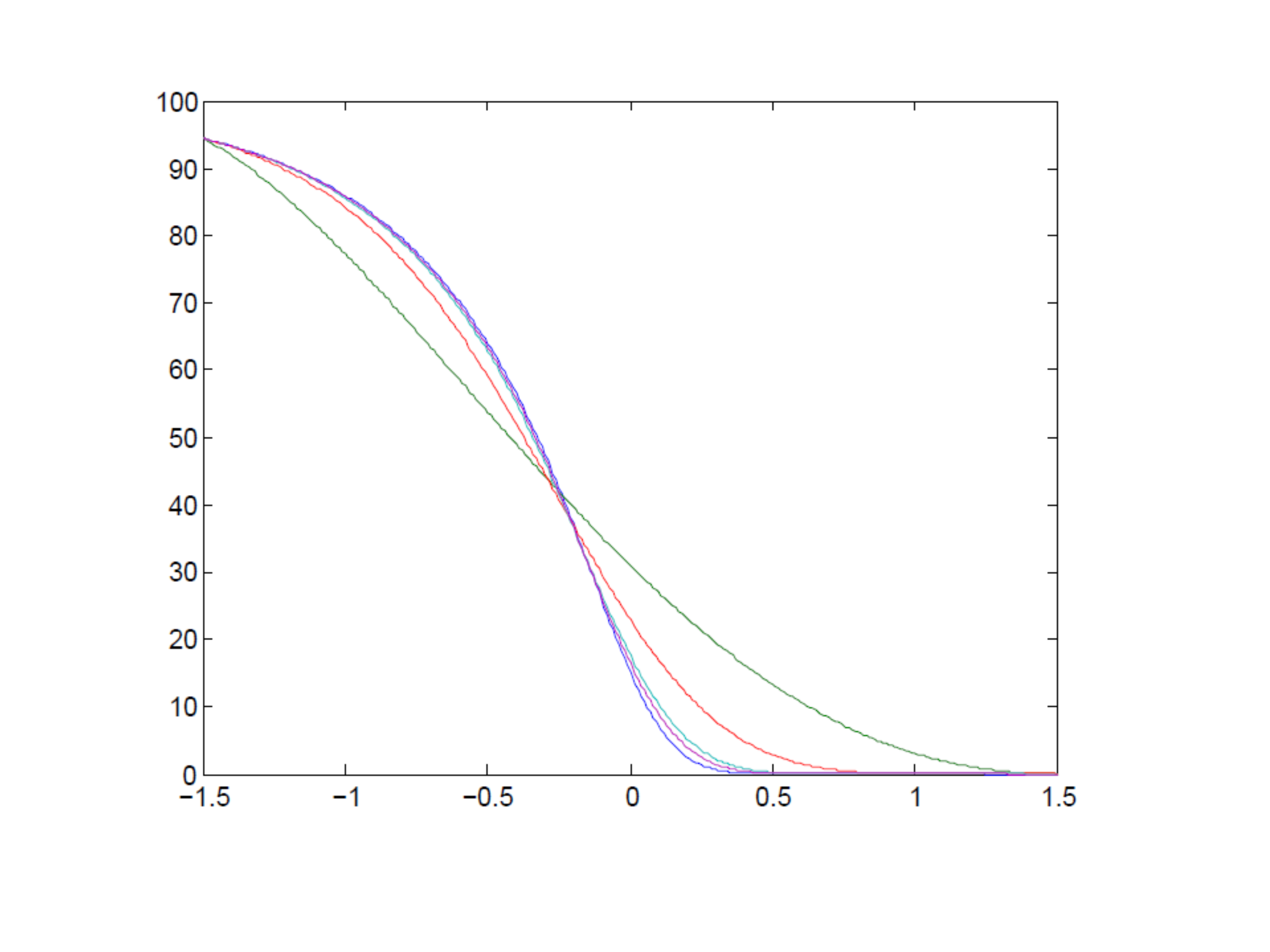}
  \caption{Approximations of the conditional price based on Bernstein polynomials using the benchmark set of parameters. Blue, green, red, light blue and magenta represents the conditional price, and approximations for $n=4$, $n=10$, $n=100$ and $n=200$ respectively. The truncation interval is $[-1.5,1.5]$. }\label{figbernsteinapprox}
\end{figure}
\begin{figure}[hb!]
  \includegraphics[width=12 cm, height=10 cm]{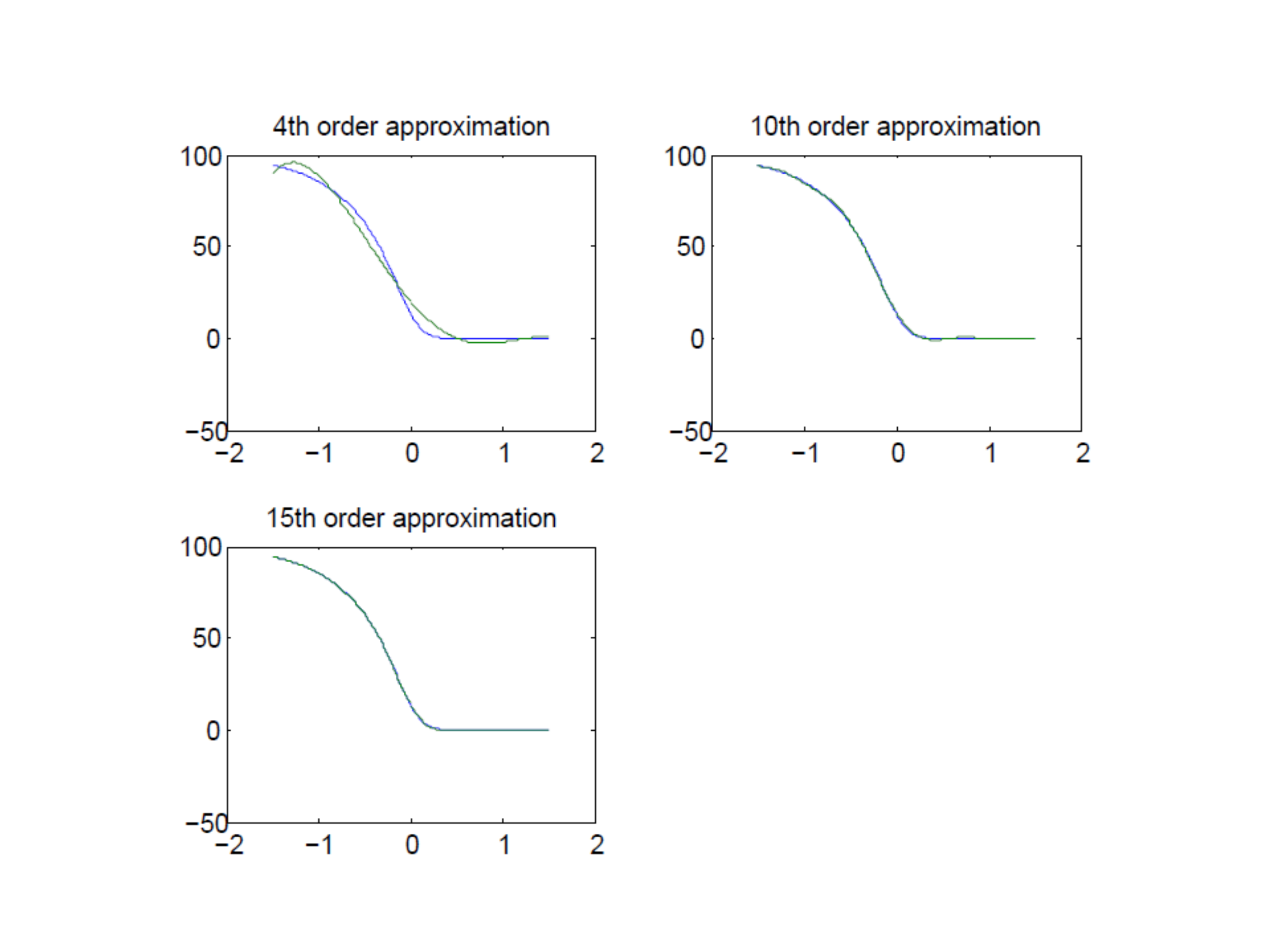}
  \caption{Approximations of the conditional price based on Chebyshev polynomials using the benchmark set of parameters. Clockwise, from top left are 4th, 10th and 15th order respectively.  The truncation interval is $[-1.5,1.5]$. }\label{figchebyapprox}
\end{figure}
\begin{figure}[htb!]
  \includegraphics[width=12 cm, height=10 cm]{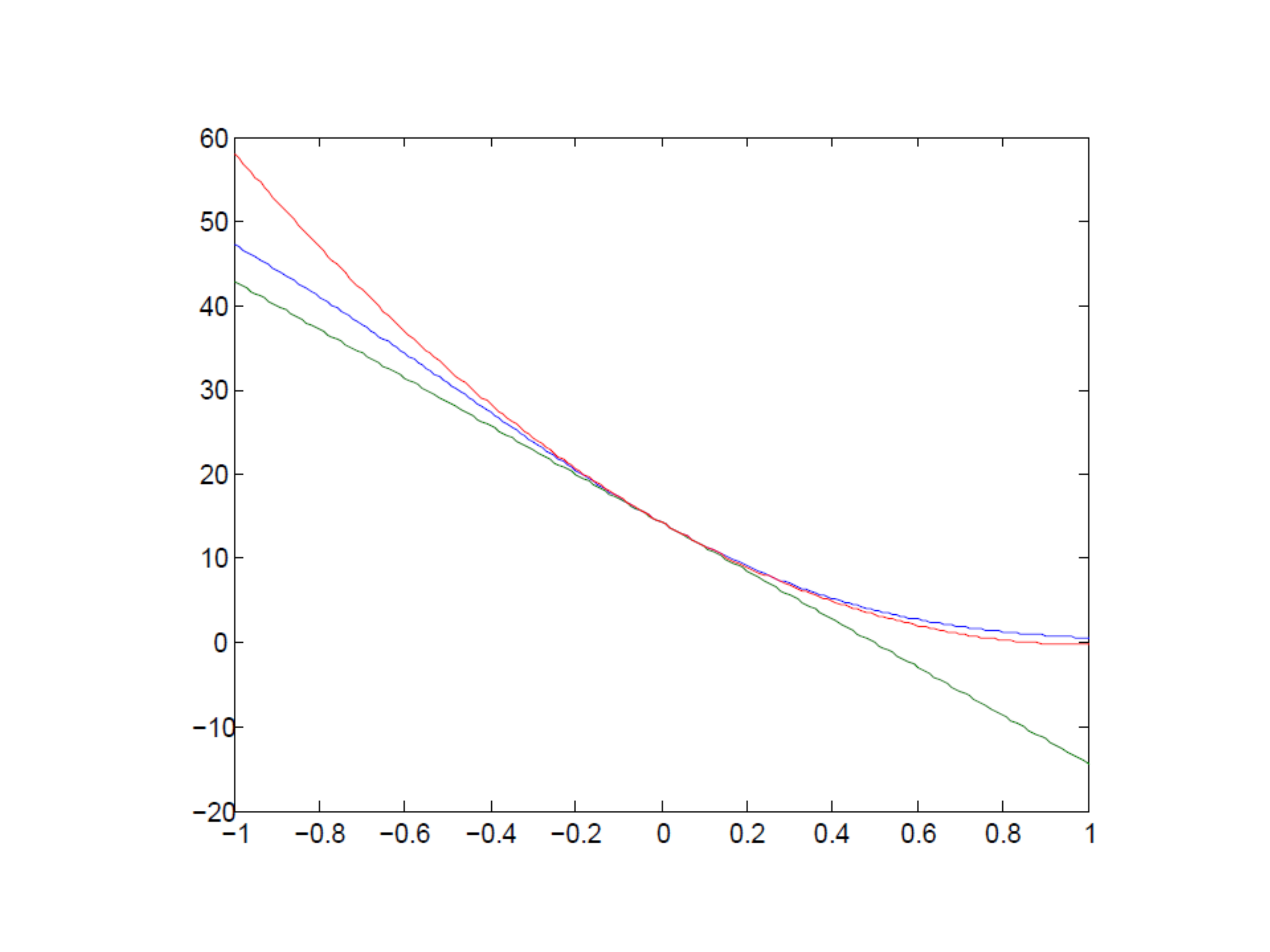}
  \caption{Approximations of the conditional price based on Taylor polynomials using the benchmark set of parameters. The blue line describes the conditional price on $[-1,1]$ while the green and red lines show first and second order expansions respectively. }\label{figtaylorapprox}
\end{figure}
\begin{figure}[htb!]
  \includegraphics[width=12 cm, height=10 cm]{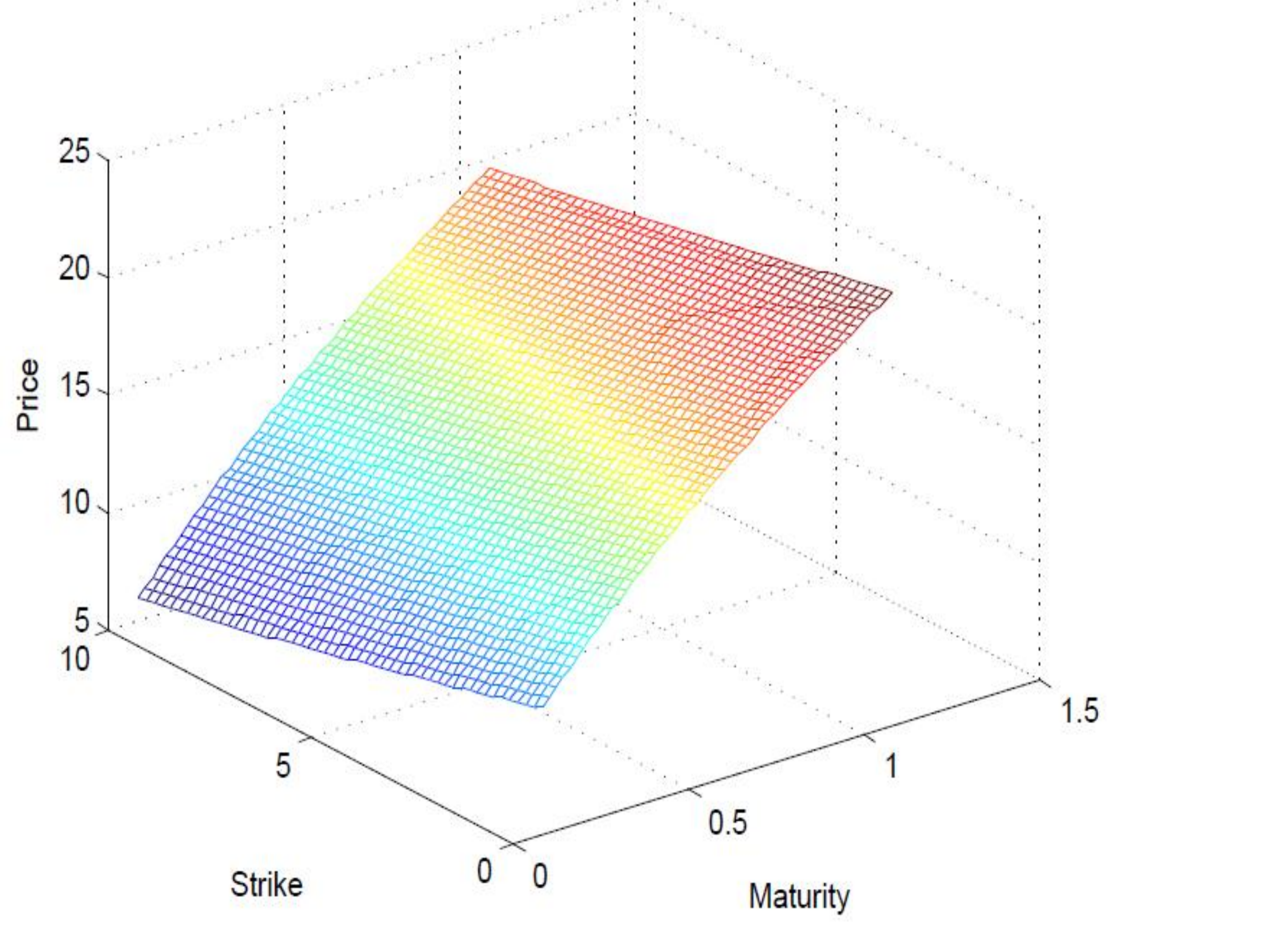}
  \caption{Prices of spreads with maturity between one month and one year and strike price between 0 and 10 dollars under the benchmark parametric set. }\label{figpricevstk}
\end{figure}
 \begin{figure}[htb!]
  \includegraphics[width=12 cm, height=10 cm]{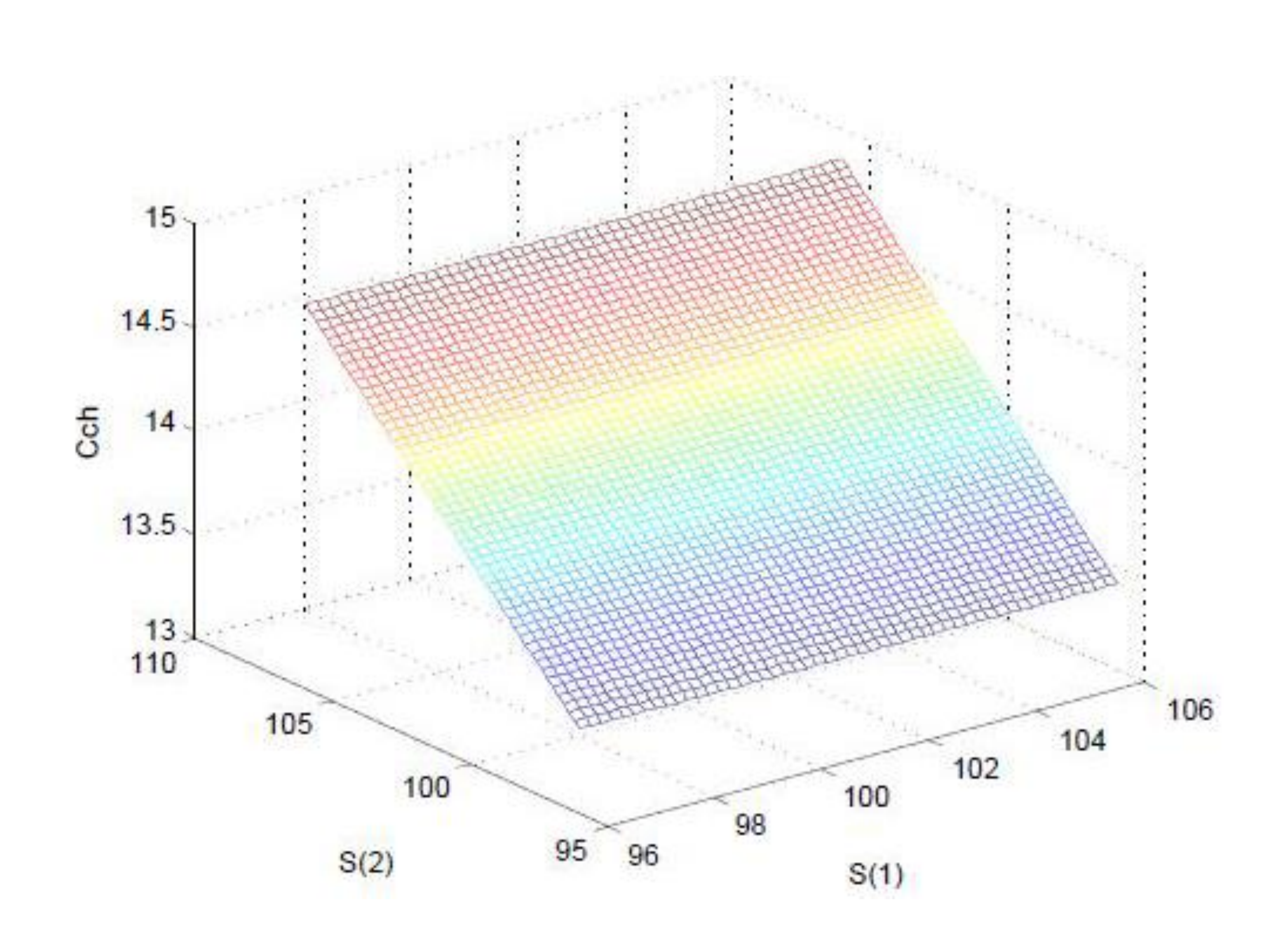}
  \caption{ Prices of spreads under the benchmark parametric set and initial prices ranging from 96-106 dollars. }\label{figpricevsinipri}
\end{figure}
 \begin{figure}[htb!]
  \includegraphics[width=12 cm, height=10 cm]{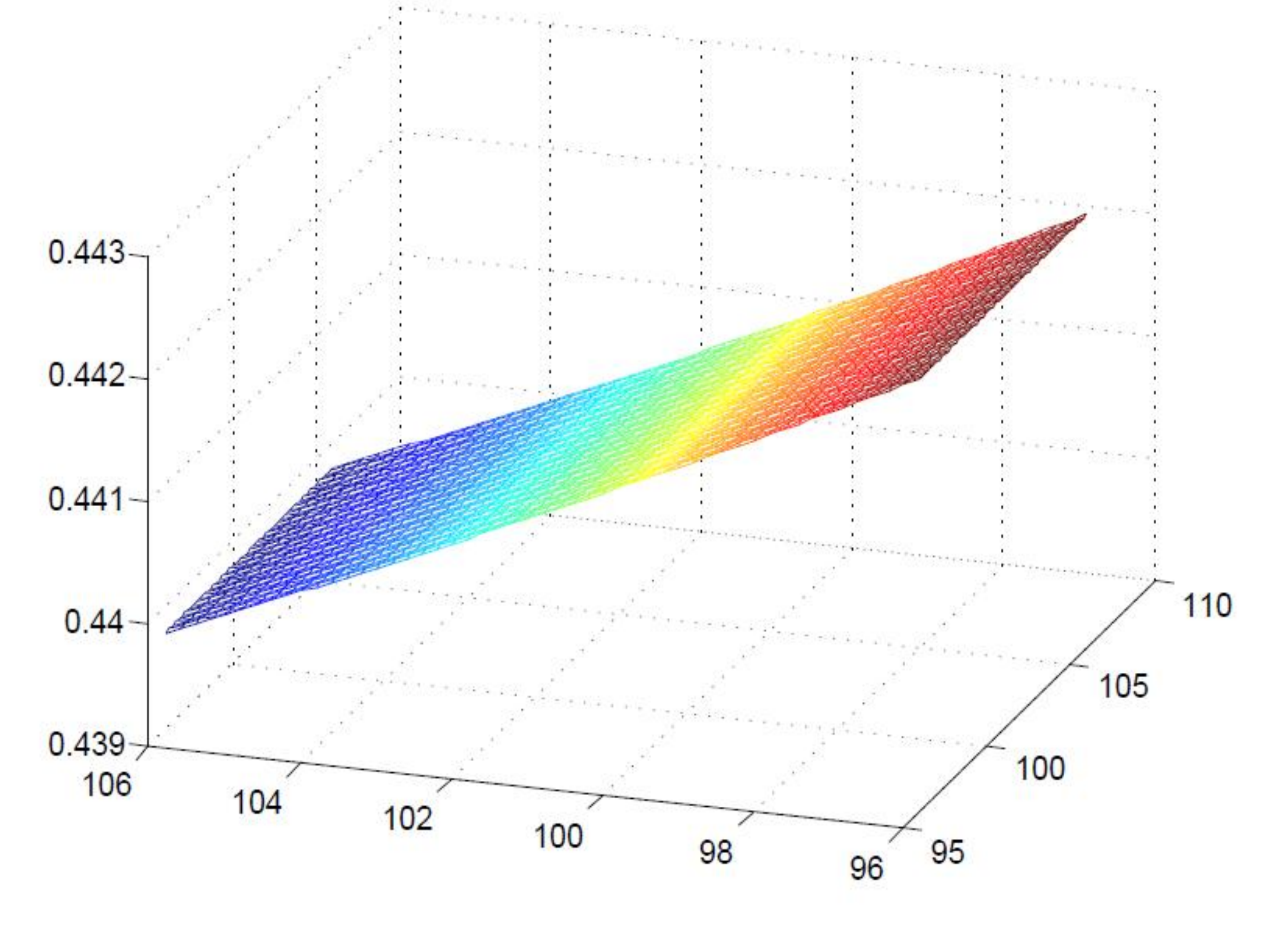}
  \caption{ Delta of the spread with respect to $s_2$  under the benchmark parametric set and initial prices ranging from 96-106 dollars. }\label{figdeltas2}
\end{figure}
\section{Numerical results}
As benchmark setting we consider  initial prices of $S^{(1)}_0=100$ and $S^{(2)}_0=96$ dollars, strike price of $K=\$1$, maturity at $T=1$ year and an annual interest rate of $r=3\%$ under a bivariate Black-Scholes model, with a negative correlation $\rho=-0.3$  and respective volatilities equal to $\sigma_1=0.3$ and $\sigma_2=0.1$. In Figure \ref{figbernsteinapprox} different approximations of the conditional price $C(y)$ based on Bernstein polynomials are shown. The blue line represents the actual conditional price, the green line describes the corresponding expansion  of order $n=4$, while the red line is the approximation for $n=10$.\\
 In order to achieve an accurate approximation polynomials with  high orders, e.g. $n=100$ or even $n=200$, are required. This limits the applicability of the method. Moreover, numerical instabilities appear in the computation of large factorials, even when an asymptotic Stirling formula is used in their computation.\\
 A more promising result is obtained when an approximation of the conditional price is done via Chebyshev polynomials. Figure \ref{figchebyapprox} represents, clockwise from top left, approximations of 4th, 10th and 15th order respectively. Expansions of order 10th and 15th are practically indistinguishable from the original function. Coefficients in the expansion are calculated following a trapezoidal rule with $100$ points on the interval $[-1.5,1.5]$.
For  comparison in Figure \ref{figtaylorapprox} we show  the conditional price function on $[-1,1]$ for the benchmark setting (blue line), together with the  first and second Taylor expansions around the mean value of $Y^{(2)}_T$, i.e. $E_{\mathcal{Q}}=Y^{(2)}_T= (r-\frac{1}{2} \sigma_2^2)T$ (lines green and red respectively). While the second order approximation  offers a  reasonable local fit, significant errors may be found for values far from the mean, which in turns may impact in the price given by its conditional expected value under the risk neutral probability when, for example volatilities are high. See Alvarez and Olivares (2014) for details.\\
In Table \ref{tab1} we show prices of spread contracts obtained under the benchmark parameter set,  the correlation between the two assets varies. Prices are computed by a Monte Carlo approach based on 10 millions simulations of the asset prices with correlated Brownian motions (column 2). In addition we implement a second order Taylor expansion and an approximation by mean of Chebyshev polynomials of order $n=15$. We consider positive and negative correlations, large moderated and weak correlations. In all cases the Chebyshev approximation shows a notable agreement with Monte Carlo prices at less computational cost. \\

\begin{table}
  \centering
  \begin{tabular}{|c|c|c|c|c|}
    \hline
Correlation & Monte Carlo &   Taylor Second approx.& Chebyshev, $n=15$\\ \hline
$\rho=-0.1$ & 14.292128&  13.87090 &14.29060779 \\ \hline
$\rho =0.1$&13.56278&14.78882&13.5649 \\ \hline
 $\rho =-0.3$   &  14.9734 & 15.0065 &14.96293\\
\hline
   $\rho =0.3$ & 12.8085& 12.7901 & 12.790289\\
 \hline
   $\rho =-0.5$    &15.6273 & 15.9238  & 15.63157 \\ \hline
 $\rho =0.5$    & 11.9525 &11.9646 &11.9566 \\
 \hline
   $\rho =-0.7$&16.2421 & 17.5217 & 16.25209\\ \hline
      $\rho =0.7$& 11.03146  & 11.194724 &  11.05286\\
    \hline
  \end{tabular}
  \caption{Spread prices for the benchmark parameters and several values of $\rho$, using Monte Carlo, a Taylor second order approximation around $y^*=0$ and a Chebyshev approximation with 15 terms, $a=-4$ and $b=0.25$. }\label{tab1}
\end{table}

 In the range of parameters considered pricing Chebyshev approximation works about seven times faster when compared with a standard Monte Carlo approach. Taylor has a even a lesser computational time, but for large correlations it is not as accurate as the former.\\
 Due to the steepness of the function $C(y)$ the Chebyshev approximation is sensible to the truncation interval $[a,b]$. In our numerical computations we have used $a=-4$ and $b=0.25$. Within the range of parameter considered most values of $Y_T^{(2)}$ lie on the selected interval $[a,b]$, hence truncation does not affect the mixed exponential-power function by much. Otherwise outside the interval $[a,b]$ the truncation to zero can be replaced by $C(a)$ if $y<a$ or $C(b)$ if $y>a$, as the conditional price remains flat outside a convenient interval. \\
  The method is  stable for the number of points considered in the trapezoid rule. Also approximation gets close to the actual price after a fairly moderate  number of polynomials. For $n=10$ the method shows a good approximation within an error in the order of a penny. For $n=15$ and $n=20$ the approximation improves even more. For approximations of larger orders the gain in precision does not compensate the increase in computational time.\\
   Figure  \ref{figpricevstk} shows prices of a spread contract based on a Chebyshev approximation of order $n=10$. Maturity times ranges from one month to one year, while strike prices go from zero (exchange option) to 10 dollars. Results are consistent with a linear increase in the contract prices with higher maturity and  their decrease with the increase of the strike price.\\
  Figure \ref{figpricevsinipri} shows the price as function of both prices while Figure  \ref{figdeltas2} provides the delta of the spread contract with respect to the second underlying and a range of values of initial prices going from 96 to 106 dollars. Delta values are calculated according to equation (\ref{eq:chebdelta}).
  \newpage
\section{Conclusions}
We compare three methods to price spreads options under a bivariate Black-Scholes model with correlated Brownian motions versus a standard Monte Carlo approach. Our results show that Bernstein approximation requires a large number of terms to achieve a good precision, while Taylor approximation does not offer a uniform convergence, hence a poor result when values are far from the point around the expansion is taken. For some values of the parameter set it may affect the corresponding expected value.\\
 The approximation based on Chebyshev polynomials seems to be appropriate in terms of the  balance offered between accuracy and computational cost. Moreover, the method is suitable to be implemented in more general models provided the conditional distribution is available.

\end{document}